\DeclareMathOperator*\bigcircop{\bigcirc}
\newcommand{\acronym}{ERDALT}
\begin{document}

\title{Certifiably robust malware detectors by design}
\author{Pierre-François Gimenez\inst{1}, Sarath Sivaprasad\inst{2}, Mario Fritz\inst{2}}


\institute{Univ. Rennes, Inria, IRISA, CentraleSupélec, Rennes, France\\
\email{pierre-francois.gimenez@inria.fr}\\
 \and
CISPA Helmholtz Center for Information Security, Saarbrücken, Germany\\
\email{\{sarath.sivaprasad,fritz\}@cispa.de}}

\maketitle

\begin{abstract}
Malware analysis involves analyzing suspicious software to detect malicious payloads. Static malware analysis, which does not require software execution, relies increasingly on machine learning techniques to achieve scalability. Although such techniques obtain very high detection accuracy, they can be easily evaded with adversarial examples where a few modifications of the sample can dupe the detector without modifying the behavior of the software. Unlike other domains, such as computer vision, creating an adversarial example of malware without altering its functionality requires specific transformations. We propose a new model architecture for certifiably robust malware detection by design. In addition, we show that every robust detector can be decomposed into a specific structure, which can be applied to learn empirically robust malware detectors, even on fragile features. Our framework \acronym{} is based on this structure. We compare and validate these approaches with machine-learning-based malware detection methods, allowing for robust detection with limited reduction of detection performance.

\keywords{adversarial example, malware analysis}
\end{abstract}

\section{Introduction}

Malware infection is a major cybersecurity threat that evolves quickly. According to the French National Agency for the Security of Information Systems (ANSSI), there were 144 successful ransomware attacks in France alone in 2024.
This threat is fought with malware analysis techniques developed by industry and academia.


These techniques can be broadly categorized into two types: static analysis and dynamic analysis. 
On the one hand, dynamic analysis executes the code in a controlled sandbox. This technique requires more equipment and time, but the malware cannot easily hide its malicious payload when activated. However, malware can evade dynamic analysis by verifying whether it is running in a virtual machine or any monitoring environment. 
On the other hand, static analysis relies on descriptive features such as section description, control-flow graph, and opcodes n-grams without running the program.
This analysis is widely used because it is fast and inexpensive. 
For these reasons, our article focuses on the static analysis of malware targeting Windows, the most popular and targeted desktop operating system.

Many traditional detection techniques used by antivirus are based on pattern matching, relying on indicators of compromise from threat intelligence (file hashes, IP addresses of command-and-control servers, suspicious domains, etc.). 
However, these techniques are difficult to scale up and cannot reliably detect new variants or malware families. Machine learning techniques have been very useful in addressing these two issues: these methods can achieve very high accuracy~\cite{raff2018malware}.
However, they are vulnerable to attacks called \textit{adversarial examples}. 
Adversarial examples are samples with intentional perturbations optimized to make the model give an incorrect prediction. 
While first identified in the field of computer vision \cite{szegedy2013intriguing} 
such adversarial examples have been found in many domains, including malware analysis \cite{suciu2019exploring}.

Several methods have been proposed to improve the robustness of machine learning models against adversarial attacks, such as \cite{advtraining,cohen_certified}, 
but they do not take advantage of domain specificity. The key constraint in designing adversarial samples for evading malware detection comes from the discrete nature of malware and the highly constrained set of transformations the attacker can use to evade malware detectors. Besides, most of the methods with guaranteed robustness assume that the perturbation is small. 
They are not applicable here, since malware attackers only need the perturbation to conserve functionality, and the magnitude of perturbation itself is insignificant.

Based on this assessment, we propose a framework where the attacker's capability is not limited by the magnitude of perturbation but by the limited set of transformations they can apply to the binary. We propose a robust-by-design malware detector that relies on features that an attacker cannot arbitrarily modify, no matter how many transformations they use. We extend this proposition by characterizing certifiably robust malware detectors as a combination of a preprocessing function and a monotonically increasing function, from which we derive a new model architecture, leading to a new defense against adversarial examples: the framework \acronym{} ("Empirically Robust by Design with Adversarial Linear Transformation"). 
This framework extracts knowledge about the attacker's capability from examples of adversarial attacks, leading to an empirically robust-by-design malware detector. Experiments show the trade-off between robustness and detection performances for various models with several defenses.

The contributions of this paper can be summarized as follows:
\begin{itemize}
\item a characterization of certifiably robust detectors;
\item \acronym, a framework to learn empirically robust detectors.
\end{itemize}

The paper is structured as follows. Section \ref{sec:robust-by-design} presents our analysis of certifiable robust malware detectors by design. The application to empirically robust malware with the \acronym~framework is presented in Section \ref{sec:empirically-robust}. Section \ref{sec:experiments} details our experiments. Section \ref{sec:conclusion} concludes the article.

\section{Related work}
\label{sec:related}

While most commercial anti-viruses rely on a database of known signatures to recognize malware, the surge of machine learning in recent decades allows detectors to identify unseen malware that would not correspond to any signature. 
Feature extractions for malware analysis are generally split into static and dynamic categories. We focus on static analysis. 
Static features are extracted from the binary without executing it. It is fast and harmless, but obfuscating techniques (such as packing) can hide part of the content of the malware. The static features proposed in EMBER \cite{anderson2018ember} are widely used in machine learning models. 

Deep learning models have been adopted in many applications, including  
malware analysis~\cite{vinayakumar2019robust}. Evasion attacks have also gained attention among both researchers and practitioners \cite{advtraining}. 
Two types of techniques have been proposed to craft adversarial examples: black-box attacks that have only access to the output of a detector and white-box attacks that know the architecture and weights of the target model. 
Various research works have successfully performed adversarial attacks to evade malware detectors and show how vulnerable some classifiers are~\cite{ling2021adversarial}. Two families of evasion attacks have been encountered: attacks on problem space and attacks on feature space \cite{pierazzi2020intriguing}. Attacks on problem space seek to directly create new binaries that evade classifiers, while attacks on feature space seek to create features that evade classifiers. This distinction is irrelevant, e.g., for computer vision, where the feature extraction is reversible, so it is trivial to transpose an attack on an image file to an attack on image features, and vice versa. However, static and dynamic feature extractions in malware analysis are generally neither reversible nor differentiable. So, even if the attacker crafts an adversarial features vector, it is not trivial to construct a malware sample with the corresponding features. Attack on problem space (i.e., on binaries) is almost always black-box attacks due to the non-invertible and non-differentiable feature mappings that make classical white-box evasion attacks like FGSM impossible. 

General defense methods against adversarial attacks have been proposed \cite{silva2020opportunities}, such as adversarial training, regularization approach, gradient masking, and adversarial example detection.  For Android malware analysis, \cite{demontis2017yes} proposes a Lipschitz-bounded linear classifier to improve the robustness. For Windows malware analysis, \cite{huang2023certified} uses randomized smoothing to improve the robustness, but it significantly lowers the detector's accuracy. Similarly, \cite{bena2024certifying} proposes a certification scheme for dynamic malware detectors.
However, these methods are limited to perturbations with a small magnitude, which is not realistic for malware adversarial examples. 
Some work on malware analysis avoids assuming the perturbation has a small magnitude, such as \cite{incer2018adversarially} who relies on monotonic models to be robust against some attacks, at the cost of a significant detection performance loss.
Our work expands on their idea of using the monotonicity property to provide robustness against adversarial examples. 


\section{Certifiable robust detector by design}
\label{sec:robust-by-design}

In domains like computer vision, the typical assumption in adversarial attacks is that the perturbation is small, i.e., within an $\epsilon$-ball \cite{akhtar2021advances}. This assumption is depicted in Figure \ref{fig:perturbation} (left part), where any picture $P'$ within a $\epsilon$-ball centered around $P$ is an adversarial example candidate for $P$.
For example, a spam that poses as an official bank message could include a bank logo to appear legitimate. Spam detectors detect such impersonation by verifying the consistency between logos and source email domains. An attacker could craft an adversarial example so that the spam detector does not recognize the bank logo properly. Since the magnitude of the perturbation is small, the logo would still look genuine to the user, and the impersonation would be successful. However, this hypothesis does not hold in malware analysis. Because the attacker’s target won’t "look" at the binary content of the malware, the utility of the modified malware does not decrease significantly with the size of the perturbation.


\begin{figure*}
    \centering
    \includegraphics[width=\textwidth]{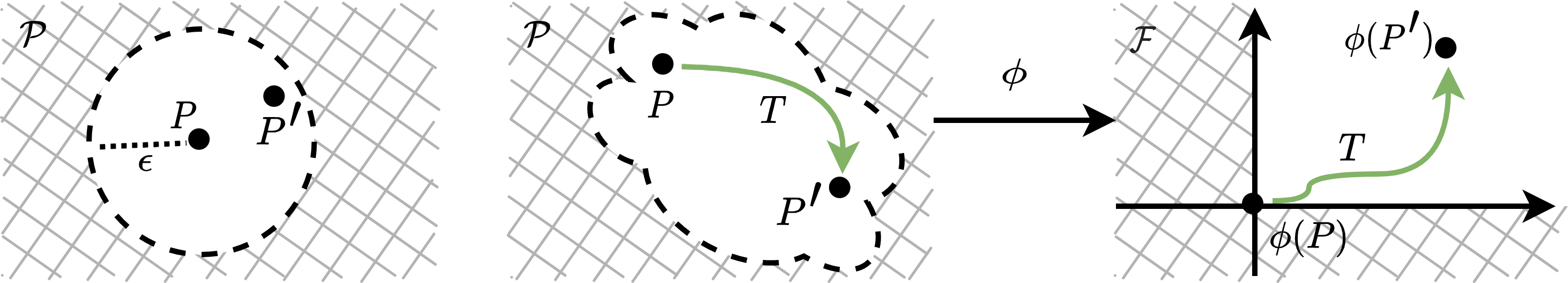}
    \caption{\parbox[t]{0.25\textwidth}{(a) Adversarial perturbation in the image problem space. } \hfill \parbox[t]{0.6\textwidth}{(b) Adversarial perturbation in the problem and the feature spaces. The crosshatched part of the space cannot be reached by perturbation.}}
    \label{fig:perturbation}
\end{figure*}



However, attackers still require the modified malware to function properly (i.e., stealing credentials, encrypting disks, etc.). So, adversarial attacks against malware analysis are generally performed in the problem space (i.e., the space of executable binaries) instead of the feature space, as it is the case for computer vision. Such attacks generally rely on elementary binary transformations 
to alter the malware in the problem space. This is depicted in the middle part of Figure \ref{fig:perturbation}, where the set of adversarial example candidates (in white) is the set of binaries with the same behavior. In this example, a program $P$ is transformed into a functionally equivalent program $P'$ with the transformation $T$.
Plenty of modifications are available to the attacker, and they do not all have the same cost. For example, off-the-shelf tools can automatically add sections to a binary, making such adversarial transformations easy to automate \cite{demetrio2021functionality}, but modifications like static import removal require access to the malware source code, which may not be the case in the malware-as-a-service business. However, almost any feature collected in malware static analysis can be modified by bloating the malware, such as adding useless sections or system call.

From this analysis, we propose a new set of features manually selected to be difficult to change for an attacker with no access to the malware source code. They contain indicators of adversarial transformation, such as DOS stub modification and signature removal, as well as features that are difficult to reduce automatically, like the number of sections, the total entropy of a section, the number of statically imported functions, and the number of keywords used in imported functions. 
There are only 40 features, which is very limited compared to classical feature mapping like EMBER, which contains several thousands of features. In the following, we will use this set of features as a baseline for how robust and accurate classifiers are when learned with a feature mapping designed with adversarial examples in mind. 

We rely on the following formal definitions of adversarial attacks and robust detectors to prove that monotonic models can lead to robust-by-design detectors. We also prove the converse is true: any robust detector can be decomposed as a monotonic detector on a latent feature space. 
We introduce $M$, the transformations available in a threat model. 
From this set, we define the preorder $\preceq_M$ in the space of programs such that $P \preceq_M P'$ if the program $P$ can be transformed into $P'$ with transformations available in $M$. We can remark that this preorder is reflexive and transitive, but it is generally not complete (we may have $P_1 \npreceq_M P_2$ and $P_2 \npreceq_M P_1$) nor anti-symmetric (we may have $P_1 \preceq_M P_2$ and $P_2 \preceq_M P_1$). 
We can formally define a robust detector for such a preorder as a detector whose decision cannot be modified from "malicious" to "benign" by applying transformations from $M$ on the original program.


As mentioned by \cite{pierazzi2020intriguing}, the notions of problem space and feature space are prevalent in malware analysis. 
By far, the most common approach in machine learning for malware analysis is to extract features from programs. Let $\phi$ be a mapping from the problem space $\mathcal P$ to the feature space $\mathcal F_\phi$. 
We propose the following definition:

\begin{definition}\label{def:robust}
Let $M$ be a set of transformations, $\phi: \mathcal P \rightarrow \mathcal F$ a feature mapping, $f: \mathcal F \rightarrow \mathbb R$ a classifier and $\tau$ its decision threshold. $f$ is said to be robust against adversarial attacks if, for any program $P$ and $P'$ such that $P \preceq_M P'$, $f(\phi(P)) \geq \tau \Longrightarrow f(\phi(P')) \geq \tau$.
\end{definition}

This definition allows us to reason separately about the feature mappings $\phi$, generally standardized within the domain, and the classifiers $f$, which can be based on many machine learning techniques. Indeed, the classifiers are typically general-purpose algorithms and models that cannot exploit the specific structure of the problem space of binaries. This is the role of the feature mapping, written by malware experts, to extract relevant data. For this reason, we argue that the adversarial examples issue should also be solved in the feature mapping itself.

Following this guideline, we identify two ways for obtaining robust classifiers: 1) only extract features that are hard for the attacker to modify and use any classifier, and 2) only extract monotonic features, i.e., that the attacker can only increase, and use a monotonic classifier. As discussed previously, almost every feature is easy for the attacker to modify, at least partially, so the first strategy is moot for malware analysis. The second strategy is based on a monotonic feature mapping that ensures that, if $P \preceq_M P'$, then $\phi(P) \leq \phi(P')$.
%
This second strategy is illustrated in Fig. \ref{fig:perturbation}: with a monotonic feature mapping $\phi$, any transformation $T$ in the problem space is mapped to a transformation that cannot decrease the value of any feature. So, given some binary $P$, the attacker can only produce an adversarial example in the first quadrant relative to $P$. It is straightforward that the monotonicity of $\phi$ and $f$ are indeed sufficient for $f$ to be robust.

\begin{proposition}\label{prop:mono-is-robust}
Let $M$ be a threat model, $\phi$ a feature mapping and $f$ a classifier. If $\phi$ is monotonically increasing w.r.t. $\preceq_M$ and if $f$ is monotonically increasing w.r.t. $\leq$, then $f$ is robust against adversarial attacks for the threat model $M$.
\end{proposition}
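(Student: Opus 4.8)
The plan is to prove the claim by direct composition of the two monotonicity hypotheses, unfolding Definition \ref{def:robust} and chaining the resulting inequalities. The statement is a straightforward consequence of the assumptions, so the bulk of the work is simply making the quantifiers and the various orderings explicit.

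First I would fix an arbitrary pair of programs $P, P'$ with $P \preceq_M P'$ and assume the hypothesis of the implication in Definition \ref{def:robust}, namely $f(\phi(P)) \geq \tau$; the goal is then to establish $f(\phi(P')) \geq \tau$. Next I would invoke the monotonicity of the feature mapping: since $\phi$ is monotonically increasing with respect to $\preceq_M$ and $P \preceq_M P'$, we obtain $\phi(P) \leq \phi(P')$ in the feature space $\mathcal F$, where $\leq$ denotes the componentwise partial order. Applying then the monotonicity of the classifier $f$ with respect to this order yields $f(\phi(P)) \leq f(\phi(P'))$.

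The conclusion follows by transitivity of the order on $\mathbb R$: combining $\tau \leq f(\phi(P))$ with $f(\phi(P)) \leq f(\phi(P'))$ gives $f(\phi(P')) \geq \tau$. Since $P$ and $P'$ were arbitrary, this is exactly the robustness condition of Definition \ref{def:robust}, so $f$ is robust against adversarial attacks for the threat model $M$.

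Given the simplicity of the argument, there is no genuine obstacle. The only point requiring care is to be explicit about which order each monotonicity hypothesis refers to---the preorder $\preceq_M$ on the problem space for $\phi$, the componentwise order on $\mathcal F$ shared by the codomain of $\phi$ and the domain of $f$, and the usual total order on $\mathbb R$ for the final comparison with $\tau$---so that the composition type-checks. I would note in passing that the reflexivity and transitivity of $\preceq_M$ play no role in this direction; only the implication structure of the robustness definition is used.
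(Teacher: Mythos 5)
Your proof is correct and is exactly the argument the paper intends: the paper omits a formal proof, remarking only that the claim is ``straightforward,'' and your chain $\phi(P) \leq \phi(P')$ followed by $f(\phi(P)) \leq f(\phi(P'))$ and transitivity with $\tau$ is precisely that straightforward argument, spelled out with the quantifiers and orders made explicit.
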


The feature mapping we propose, discussed earlier, is monotonic when the attacker has no particular capability. Therefore, robustness is guaranteed when such a feature set is used with a monotonic classifier (this claim is experimentally assessed in Section~\ref{sec:experiments}). Remark that some common feature engineering strategies, such as histograms, are typically not monotonic. For example, the EMBER features set \cite{anderson2018ember} contains bytes histograms of readable strings. However, even though it is difficult for the attacker to remove readable strings, they can easily reduce the proportion of one byte by adding new strings with other bytes. For this reason, such features are fragile and should be avoided.

In the following, we show that such monotonic classifiers are not just yet another tool to obtain robustness against adversarial examples. In fact, with the proper feature post-processing, every robust classifier can be interpreted as a monotonic classifier, as shown by the following proposition:

\begin{proposition}\label{th2}
Let $M$ be a threat model, $\phi$ a feature mapping and $f$ a classifier such that $f$ is robust against adversarial attacks for the threat model $M$. There exist $g$ and $h$ such that $f \circ \phi = (f \circ h) \circ (g \circ \phi)$ and $f \circ h$ is monotonically increasing.
\end{proposition}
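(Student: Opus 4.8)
The plan is to exhibit explicit $g$ and $h$ by re-encoding the feature space so that the classifier becomes monotone, with robustness supplying exactly the monotone structure needed. The single fact I would extract from the hypothesis is a threshold reformulation: $f$ being robust means the decision region pulls back to an upward-closed set, i.e. whenever $P \preceq_M P'$ one has $\mathbf{1}[f(\phi(P)) \geq \tau] \leq \mathbf{1}[f(\phi(P')) \geq \tau]$. This is the only place the hypothesis enters, and it is what turns a thresholded decision into a genuinely monotone latent coordinate.

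First I would fix the latent space to be (the image in) $\{0,1\} \times \mathbb{R}$ with the standard product order, and define $g : \mathcal{F} \to \{0,1\} \times \mathbb{R}$ by $g(x) = (\mathbf{1}[f(x) \geq \tau],\, f(x))$. It is important that $g$ is a bona fide function of the feature vector $x$, not of the program: it is built from $f$ alone, which sidesteps the subtlety that $\preceq_M$ lives on $\mathcal{P}$ while $g$ must act on $\mathcal{F}$. Composing, $g \circ \phi$ sends $P$ to $(\mathbf{1}[f(\phi(P)) \geq \tau],\, f(\phi(P)))$, whose first coordinate is monotone along $\preceq_M$ by the robustness fact above. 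Then I would define $h$ on the image of $g$ as a right inverse of the score map: since every pair $(b,t)$ in that image satisfies $t = f(x)$ for some $x$, I can choose $h(b,t) \in \mathcal{F}$ with $f(h(b,t)) = t$, independently of $b$. With this choice $f \circ h$ is the projection $(b,t) \mapsto t$, which is non-decreasing for the product order (constant in $b$, increasing in $t$), giving the required monotonicity of $f \circ h$; and $(f \circ h)(g(\phi(P))) = f(\phi(P))$ holds by construction, so $f \circ \phi = (f \circ h) \circ (g \circ \phi)$ exactly.

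The step I expect to be delicate, and the conceptual heart of the statement, is reconciling the exact factorization with the monotone reading of the decomposition. If one insisted that the whole map $g \circ \phi$ be monotone with respect to $\preceq_M$, then combining this with monotone $f \circ h$ and the exact factorization would force $f \circ \phi$ itself to be fully monotone, which robustness alone does not grant (robustness only constrains the crossing of $\tau$). The construction threads this needle by letting robustness act on the decision coordinate, which is truly monotone, while the exact value is carried on the inert score coordinate, on which $f \circ h$ is trivially non-decreasing. I would make this separation explicit.

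Finally, to tie the result back to Proposition~\ref{prop:mono-is-robust}, I would observe that discarding the score coordinate leaves a genuinely $\preceq_M$-monotone latent feature $P \mapsto \mathbf{1}[f(\phi(P)) \geq \tau]$ on which the induced classifier reproduces the original decision: at the level of its decisions, the robust detector is a monotone classifier applied to a monotone feature map, exactly the structure the framework exploits. The only routine verifications left are that the fibers of $f$ are nonempty over its attained range (so the section $h$ exists, invoking choice if $\mathcal{F}$ is large) and a trivial handling of the degenerate case where one decision class is empty.
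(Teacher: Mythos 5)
Your construction is correct for the proposition as literally stated, but it is genuinely different from the paper's. The paper does not split the latent space into a decision bit and a score: it defines a complete preorder $\preceq_\phi$ on the codomain of $\phi$ by $\phi(P)\preceq_\phi\phi(P')$ iff $f(\phi(P))\le f(\phi(P'))$, takes $g$ to be any strictly monotone map from $(\mathcal D_\phi,\preceq_\phi)$ into $(\mathbb R^k,\le)$ (so $g$ quotients out vectors with equal score and orders the rest by score), and takes $h$ to be a section of $g$; the factorization and the monotonicity of $f\circ h$ then follow from completeness of $\preceq_\phi$. The payoff of the paper's route is that $g\circ\phi$ itself comes out monotone with respect to $\preceq_M$, which is exactly what the surrounding text needs (``learn a monotonic detector on the features $g\circ\phi$'') and what connects back to Proposition~\ref{prop:mono-is-robust}; your $g\circ\phi$ is not $\preceq_M$-monotone (only its decision coordinate is), so your decomposition certifies nothing beyond the decision level. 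However, your third paragraph puts your finger on the real issue: the paper obtains $\preceq_M$-monotonicity of $g\circ\phi$ from the claim that robustness implies $P\preceq_M P'\Rightarrow f(\phi(P))\le f(\phi(P'))$, which does \emph{not} follow from Definition~\ref{def:robust} (robustness only constrains the crossing of $\tau$; the score itself may decrease under a transformation without violating robustness). So the paper's proof implicitly strengthens the hypothesis to score-monotonicity, whereas your construction works under the stated hypothesis precisely because it isolates the one coordinate that Definition~\ref{def:robust} actually makes monotone. Your closing observation --- that restricting to the decision bit yields a genuinely $\preceq_M$-monotone feature on which a monotone classifier reproduces the original decisions --- is the correct salvage of the proposition's intended content under the definition as written. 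Both proofs need a choice of section for $h$, so you are on equal footing there.
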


\begin{proof}
Let $\mathcal P$ the set of executable binaries and $\mathcal D_\phi$ the codomain of $\phi$.
Consider the following complete preorder $\preceq_\phi$ defined on $\mathcal D_\phi \times \mathcal D_\phi$: $\phi(P) \preceq_\phi \phi(P')$ if and only if $f(\phi(P)) \leq f(\phi(P'))$. Since $f \circ \phi$ is robust by assumption, this preorder satisfies the property: $P \preceq_M P' \Rightarrow \phi(P) \preceq_\phi \phi(P')$.
%
Denote $k$ the dimension of the feature space and $g$ any strictly monotonically increasing function from $(\mathcal D_\phi,\preceq_\phi)$ into $(\mathcal D_f,\leq)$, where $\mathcal D_f = g(\mathcal D_\phi) \subseteq \mathbb R^k$. 
So, if $v_1 \prec_\phi v_2$, then $g(v_1) < g(v_2)$. Remark that $g$ is surjective (by definition of its codomain) but generally not injective: indeed, two programs $P_1$ and $P_2$ such that $\phi(P_1) \preceq_\phi \phi(P_2)$ and $\phi(P_2) \preceq_\phi \phi(P_1)$ will be mapped to the same vector due to the anti-symmetry of $\leq$. Let $h$ be a function defined on $\mathcal D_f$ such that $h(v) \in g^{-1}(v)$ (the latter is a set because $g$ is generally not injective). Let us show that $f \circ \phi = f \circ h \circ g \circ \phi$. Let $P \in \mathcal P$, let $v_1 = \phi(P)$ and $v_2 = h(g(v_1))$. Due to the definition of $g$ and $h$, $g(v_1) = g(v_2)$. Since $\preceq_\phi$ is a complete preorder, there are only three possibilities: $v_1 \prec_\phi v_2$ (but in that case, $g(v_1) < g(v_2)$), $v_2 \prec_\phi v_1$ (but in that case, $g(v_2) < g(v_1)$), or $v_1 \sim_\phi v_2$. Only the last case is possible. Due to the definition of $\preceq_\phi$, we can conclude that $f(v_1) = f(v_2)$, i.e., $f(\phi(P)) = f(g(h(\phi(P)))$. Therefore, $f \circ \phi = f \circ h \circ g \circ \phi$.
Let us now show that $f \circ h$ is monotonically increasing. Let $v_1, v_2 \in \mathcal D_\phi$ such that $v_1 \leq v_2$. By definition of $h$ and $g$, $h(v_1) \preceq_\phi h(v_2)$. Due to the definition of $\preceq_\phi$, it implies that $f(h(v_1)) \leq f(h(v_2))$. Therefore, $f \circ h$ is monotonically increasing, concluding the proof.
\end{proof}


So, if there exists a robust classifier $f$ with good performances, then with the correct feature post-processing $g$, one can learn a monotonic detector on the features $g \circ \phi$ and obtain the same performances as $f$ while keeping the certifiable robustness. Such monotonic classifiers have been proposed for malware analysis \cite{incer2018adversarially}. Our work shows that 1) this idea stems from the asymmetrical cost of the transformations in the problem space, 2) such classifiers can be certified to be robust with respect to a threat model, and 3) all robust classifier can be expressed as a monotonic classifier with some change to the feature mapping. The next section shows how we leverage Proposition \ref{th2} to propose a new framework, \acronym, that learns feature post-processing alongside a monotonic classifier to obtain a more robust classifier.

\section{\acronym: an empirically robust by-design malware detector}
\label{sec:empirically-robust}

The method described in Section \ref{sec:robust-by-design} is certifiably robust, assuming the risk analysis is correctly updated as attacker capabilities evolve. In this Section, we propose a new detector that is \textit{not} certifiably robust but only empirically robust. Indeed, it requires adversarial examples to learn the set of transformations the attackers can use and, therefore, is limited by its training dataset. 

Since any robust classifier can be decomposed into a post-processing function $g$ and a monotonic function $f \circ h$, we propose to learn these two functions $g$ and $f \circ h$ jointly to obtain a robust classifier. 
To perform this learning, we assume we have access to adversarial examples alongside original samples so the model can learn what the attacker can (and cannot) do. This assumption is akin to what requires other protection techniques, such as adversarial training.

To illustrate this idea, suppose the attacker can replace API calls with equivalent ones, such as replacing \texttt{CreateFile} with \texttt{CreateFileEx}. In that case, the two features that count the number of \texttt{CreateFileEx} and \texttt{CreateFile} calls are not monotonically increasing, but the effect of the transformation on the features is linear. Intuitively, the post-processing $g$ could build a feature that counts the occurrences of either \texttt{CreateFileEx} or \texttt{CreateFile}. This feature would not be affected by the transformation and could be used with a monotonic classifier.

In this section, we call \textit{perturbation vector} the difference of features between the transformed software and the base software, i.e., any $\phi(T(P)) - \phi(P)$ for any $T$ and $P$. For example, if a feature is unchanged by $T$, then its value is 0 in the perturbation vector. In the previous example of the transformation by substitution, the perturbation vector will be 1 for the count of \texttt{CreateFileEx}, -1 for the count of \texttt{CreateFile}, and 0 for the rest of the vector.

To simplify the learning and help the generalization, we propose to introduce an assumption that would still allow to tackle this kind of transformations by substitution: we assume perturbation vectors related to one transformation does not depend on the modified software itself. This is the case for the previous example: the perturbation vector does not depend on the original software. This hypothesis can be mathematically formalized as follows: for a threat model $M$, the feature mapping $\phi$ is such that the set of perturbation vectors, i.e. $\{\phi(T(P)) - \phi(P) \mid T \in M, P \in \mathcal P\}$, is finite. We denote this set of perturbations vectors $\Delta_M$.


\begin{figure}
\centering
\includegraphics[width=0.7\linewidth]{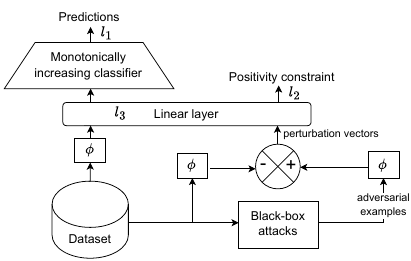}
\caption{\acronym{} framework. 
$l_1$ minimize the detection error, $l_2$ ensures the perturbations positivity and $l_3$ encourages the linear layer to be a diagonal matrix.}
\label{fig:arch}
\vspace*{-10pt}
\end{figure}

We propose the deep learning architecture presented in Figure \ref{fig:arch} to learn an empirically monotonic feature mapping from examples of attacks. It is composed of a linear layer (corresponding to the post-processing function $g$) and a monotonic classifier (corresponding to $f \circ h$). The linear layer must be a linear function, without any bias nor activation function. It is fitted such that, when the input is the perturbation vectors, its output is always positive. Such behavior is enforced with the $l_2$ loss. This constraint guarantees the monotonicity of the feature post-processing. The upper layers should be monotonic. 
Its loss $l_1$ is any typical loss function used for classification. This framework is robust by design, as shown by the following proposition. We name it “Empirically Robust by Design with Adversarial Linear Transformation” (\acronym~for short).

\begin{proposition}\label{prop:robustness}
Let $M$ be a threat model and $\phi$ a feature mapping such that $\Delta_M = \{\phi(T(P)) - \phi(P) \mid T \in M, P \in \mathcal P\}$ is finite. If $\Delta_M$ is known during training, then \acronym~is robust.
\end{proposition}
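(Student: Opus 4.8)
The plan is to reduce the claim to Proposition~\ref{prop:mono-is-robust} by exhibiting $g \circ \phi$ as a monotonic feature mapping and $f \circ h$ as a monotonic classifier, so that the full detector $(f \circ h) \circ (g \circ \phi)$ falls exactly within the scope of that proposition. The monotonicity of $f \circ h$ holds by construction, since the upper layers form a monotonic classifier. The entire argument therefore concentrates on showing that $g \circ \phi$ is monotonically increasing with respect to $\preceq_M$, i.e.\ that $P \preceq_M P'$ implies $g(\phi(P)) \leq g(\phi(P'))$ componentwise.

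First I would unfold the meaning of $P \preceq_M P'$. By definition of the preorder, $P'$ is obtained from $P$ by applying a finite sequence of transformations $T_1, \dots, T_n$ drawn from $M$. Writing $P_0 = P$, $P_i = T_i(P_{i-1})$ and $P_n = P'$, I would invoke the standing assumption that each single-step perturbation vector $\phi(P_i) - \phi(P_{i-1}) = \phi(T_i(P_{i-1})) - \phi(P_{i-1})$ belongs to $\Delta_M$ independently of the software it is applied to. Telescoping then gives $\phi(P') - \phi(P) = \sum_{i=1}^{n} \delta_i$ with every $\delta_i \in \Delta_M$. This decomposition of the total feature displacement into a sum of elements of $\Delta_M$ is the crux of the proof.

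Next I would exploit the two defining properties of the linear layer $g$: it is genuinely linear (no bias, no activation), and the constraint enforced by the loss $l_2$ makes it map every element of the known finite set $\Delta_M$ to a componentwise non-negative vector, i.e.\ $g(\delta) \geq 0$ for all $\delta \in \Delta_M$. Combining this with the decomposition from the previous step, linearity yields $g(\phi(P')) - g(\phi(P)) = g(\phi(P') - \phi(P)) = \sum_{i=1}^{n} g(\delta_i) \geq 0$, hence $g(\phi(P)) \leq g(\phi(P'))$. This establishes that $g \circ \phi$ is monotonically increasing with respect to $\preceq_M$.

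Finally, I would apply Proposition~\ref{prop:mono-is-robust} with feature mapping $g \circ \phi$ and classifier $f \circ h$: since the former is monotonic w.r.t.\ $\preceq_M$ and the latter is monotonic w.r.t.\ $\leq$, the composite detector is robust, which is the claim. The main obstacle is the second paragraph: the telescoping decomposition rests essentially on the finiteness-and-independence assumption on $\Delta_M$, and one must be careful that the positivity of $g$ is required to hold \emph{exactly} — as a hard constraint on the finitely many known perturbation vectors — rather than merely approximately; otherwise a long enough composition of transformations could accumulate small negative contributions and break monotonicity. Once the exact decomposition and the exact positivity are in hand, the linearity step and the appeal to Proposition~\ref{prop:mono-is-robust} are immediate.
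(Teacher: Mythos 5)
Your proof is correct and follows essentially the same route as the paper's: the telescoping decomposition of $\phi(P') - \phi(P)$ into a sum of elements of $\Delta_M$, followed by linearity and the non-negativity constraint on the linear layer, then monotonicity of the upper layers. The only cosmetic difference is that you package the final step as an appeal to Proposition~\ref{prop:mono-is-robust} while the paper re-derives it inline, and your caveat that the positivity must hold exactly (not merely approximately via the loss) is a worthwhile observation the paper leaves implicit.
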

\begin{proof}
Let us decompose the architecture in two parts: the first dense layer $L$, that have no bias nor activation, and the upper layers $L’$. The first layer is constrained such that, for all perturbation vector $\delta \in \Delta_M$, $L(\delta) \geq 0$. The upper layers have a monotonicity constraint, so if $v_1 \leq v_2$ then $L’(v_1) \leq L’(v_2)$. Let us show that $L’ \circ L$ is robust. Let $P$ and $P'$ be two programs such that $P \preceq_M P'$. It means there exists a succession of $n$ transformations $T_1, T_2, \ldots, T_n$ such that $P' = (T_n \circ \ldots \circ T_2 \circ T_1) (P) = (\bigcirc_{k=n}^{1} T_k) (P)$. So:
\begin{align*}
\phi(P') - \phi(P) &= \phi((\bigcircop_{k=n}^{1} T_k) (P))\\
&= \phi((\bigcircop_{k=n}^{1} T_k) (P)) + \sum_{j=1}^{n-1} \phi((\bigcircop_{k=j}^{1} T_k)(P)) - \sum_{j=1}^{n-1} \phi((\bigcircop_{k=j}^{1} T_k)(P)) - \phi(P)\\
\end{align*}

\begin{align*}
&= \sum_{j=1}^{n} \phi((\bigcircop_{k=j}^{1} T_k) (P)) - \sum_{j=0}^{n-1}  \phi((\bigcircop_{k=j}^{1} T_k)(P))\\
&= \sum_{j=1}^{n} \left(\phi((\bigcircop_{k=j}^{1} T_k) (P)) - \phi((\bigcircop_{k=j-1}^{1} T_k)(P))\right)\\
&= \sum_{j=1}^{n} \left(\phi(T_j\circ(\bigcircop_{k=j-1}^{1} T_k) (P)) - \phi((\bigcircop_{k=j-1}^{1} T_k)(P))\right)\\
\end{align*}

We can remark that this last sum only contains elements of $\Delta_M$. Let us denote them $\delta_i$. Due to the linearity of $L$:
$L(\phi(P'))-L(\phi(P)) = L\left(\sum_i \delta_i\right) = \sum_i L(\delta_i) \geq 0$.
So, $L(\phi(P)) \leq L(\phi(P'))$. Since $L’$ is monotonically increasing, $L’(L(\phi(P))) \leq L’(L(\phi(P')))$. This proves the robustness of $L’ \circ L$.
\end{proof}

The linear layer of \acronym{} can make linear combinations of its input but will also drop the features that cannot be made monotonic with respect to the threat model. In this sense, it also serves as an automatic feature selection. 
The advantage of this method is that it does not require expert knowledge: as long as adversarial examples are available, possibly by using off-the-shelf tools or adversarial examples found in the wild, this method can be used. Its structure also makes it more explainable than adversarial training. 

%
%
%
%
%

\section{Experiments}
\label{sec:experiments}


In this section, we experimentally assess the contributions on robust classifiers, including the framework \acronym. We seek to answer the following research questions:
%
What is the impact of the features on the detection performances and the robustness performances?
What protection method allows for the best tradeoff between detection performances and robustness?
How does each component of \acronym{} contribute to its performances?
How do the PV feature selection and the linear layer of \acronym{} compare?



There are no standard PE executable datasets for malware analysis. Windows malware datasets, like EMBER2017, EMBER2018 or SOREL-20M, do not contain actual executable samples but only features. We cannot use these datasets since we compare methods with various feature sets. So, we rely on the dataset of \cite{dambra2023decoding} containing 670 families of malware, each one with 100 different samples, for a total of 67000 malware. The goodware dataset contains 16611 samples, collected from Windows default installation and various online repositories like Chocolatey. In the following, we create a balanced training dataset containing 120 families of malware (so 12000 samples) and 12000 goodware. The testing set is composed of 4611 goodware and 550 families of malware (so 55000 samples). 



We use the classical area under the ROC (or ROC AUC for short) metric to compare the detection performances. The ROC AUC is the area under the curve that plots the false positive rate against the true negative rate with respect to a decision threshold. ROC AUC typically ranges from 0.5 (random classifier) to 1 (perfect discrimination) in binary classification.
We evaluate two groups of models: 1) classical models widely used in machine learning: neural networks, $k$-nn, random forests and gradient-boosted trees, 2) monotonic models: monotonically increasing gradient-boosted trees and monotonic neural network~\cite{siva2021curious}.
The implementation is available online\footnote{\url{https://github.com/PFGimenez/certifiably-robust-malware-detectors-by-design}}.


We experiment with four protection methods against adversarial attacks: adversarial training, feature selection, our framework \acronym, and a hybrid between \acronym~and adversarial training. Remark that all these methods require adversarial examples so they can be fairly compared. Each method has access to 1033 adversarial examples.
We propose to compare \acronym~with another approach based on feature selection. Given a dataset of perturbation vectors $\Delta$, this procedure identifies all features with non-negative perturbation, i.e., that are monotonic with respect to this set of attacks. When used jointly with a monotonic model, such models should be robust, as shown by Proposition~\ref{prop:mono-is-robust}. We call this method “perturbation-vectors-based features selection”, or PV feature selection for short.



The adversarial attacks rely on the secml-malware package developed by \cite{demetrio2021functionality}. These attacks are black-box functionality-preserving transformations on Windows malware aiming at evading static analysis. They are very effective at evading detectors, even with minimal modifications. 
%

\subsection{What is the impact of the features set on robustness and performances?}

\begin{table*}[t]
\small
\begin{center}
\begin{tabular}{ l | cc | cc }

\textbf{Model} & \multicolumn{2}{c|}{\textbf{Manual features}} &	\multicolumn{2}{c}{\textbf{EMBER}}\\ 
\midrule

\textbf{}&\textbf{ROC} &\textbf{Robustness} &\textbf{ROC} &\textbf{Robustness} \\ 
\midrule

Baseline network & 89.9\% & \textbf{100\%} & 91.6\% & \textbf{82.0\%} \\
Monotonic network & 69.0\% & \textbf{100\%} & 87.4\% & 71.5\% \\
Random Forest & \textbf{94.6\%} & 98.5\% & 96.2\% & 81.0\% \\
$k$-nn & 83.7\% & 93.5\% & 88.6\% & 0\%\\
Monotonic gradient-boosted trees & 76.2\% & \textbf{100\%} & 92.7\% & 73.5\%\\
Gradient-boosted trees & 92.3\% & 99.0\% & \textbf{97.5\%} & 75.0\% \\
\end{tabular}
\end{center}
\caption{Comparing the performance (ROC AUC and robustness) of different models over the two features set without specific adversarial protection}
\label{tab:acc-rob}
\vspace*{-15pt}
\end{table*}

In this experiment, we evaluate the robustness and the detection performances of various models applied to both feature sets. For each experimental setup, we ran the eight attacks on 200 malware samples correctly identified as malware. Each attack has a 60-second timeout for scalability's sake.
The robustness is defined as the proportion of decisions evaded by at least one attack.
Table~\ref{tab:acc-rob} presents the overall robustness and detection performances. 
Most models using EMBER have a good AUC (higher than 90\%), which is typical in malware detection. However, these models are generally vulnerable to adversarial attacks. $k$-nn is the most vulnerable model (all attacks succeeded). The other models models can resist some attacks. The baseline neural network is the most robust model on EMBER by a slight margin. It has a lower ROC AUC than non-deep models. This is a common observation in malware static analysis.

Our manually selected features yield more robust models: they all have at least 93\% robustness, and some of them were never evaded. Remark the significant impact of feature mapping on the robustness: $k$-nn classifier is typically not robust (all malware have successfully evaded the detectors based on EMBER), but its robustness with manual robustness is 93.5\%. Besides, as expected, using the manual features with a monotonic model allows for 100\% robustness, although the classification performance is much lower. The best trade-off between ROC AUC and robustness is, in our opinion, the random forest model with manual features, with 94.6\% AUC and 98.5\% robustness.


This experiment shows the drastic effect of the features set on both the detection performances and the robustness of the models: even models that are easy to attack can become robust with manually selected features. However, detection performances and robustness are in a trade-off regarding features set: EMBER has many features, helping both the detection by containing more information and the attacker who can rely on more fragile features to evade detectors. Finally, our manually selected features required a risk analysis and a good knowledge of attacker capability, so this approach could not be transposed to another domain without the help of an expert. 




\subsection{How do the defense mechanisms compare?}

\begin{table*}[t]
\begin{center}
\begin{tabular}{ l | l | cc }

\textbf{Protection} & \textbf{Model} & \multicolumn{2}{c}{\textbf{EMBER}} 
\\ \midrule

&\textbf{}&\textbf{ROC} &\textbf{Robustness} \\
\midrule

PV feature selection & 
Random Forest & 95.2\% & \textbf{100\%}  \\
&Mono. gradient-boosted trees & 86.7\% & \textbf{100\%} \\
&Gradient-boosted trees & 93.8\% & \textbf{100\%} \\
 \midrule
Adversarial training 
&Random Forest & \textbf{97.6\%}& 94.5\%  \\
&Mono. gradient-boosted trees & 92.7\% & 95.5\%  \\
&Gradient-boosted trees & \textbf{97.6\%} & 96.5\%  \\
\midrule
\acronym&Neural network & 93.0\% & 96.0\% \\
\midrule
\acronym{} \& adversarial training& Neural network & 85.5\% & \textbf{100\%} \\
\end{tabular}
\end{center}
\caption{Comparing the performance of different models with PV feature selection, adversarial training, \acronym, and \acronym~with adversarial training}
\label{tab:acc-rob-protection}
\vspace*{-15pt}
\end{table*}

In this experiment, we compare the several defense mechanisms, namely PV feature selection, adversarial training, \acronym, and \acronym~combined with adversarial training. The results are shown in Table~\ref{tab:acc-rob-protection}. 
As we can see, the PV feature selection approach yields the most robust models but with some detection performance penalties. In fact, this method shows that with access to adversarial examples, the automatic extraction of non-fragile features can yield performances comparable, and even better, to the manual extraction by an expert. 
Adversarial training can be used to get models that are quite robust (about 95\% of attacks failed) without performance penalty compared to the results from Table \ref{tab:acc-rob}. So, compared to the PV feature selection, they are slightly less robust but more effective at discriminating malware from goodware. 
Our framework \acronym{} provides similar results in terms of robustness as adversarial training. 
However, these similar results should not be interpreted as \acronym{} working similarly to adversarial training: as shown by the last line in Table \ref{tab:acc-rob}, both methods can be combined to obtain an even more robust model (in fact, no attack succeeded), at the cost of some performance penalty.

\subsection{How do the feature selection methods compare?}

\begin{table*}[t]
\centering
\begin{tabular}{l | c | c | c}
&  \textbf{PV feature selection} & \textbf{\acronym{} selection} & \textbf{Intersection} \\ 
\midrule
Byte & 0\% & 84.9\% & 0\%\\
Strings & 1.9\% & 94.2\% & 1.9\%\\
General & 30.0\% & 60.0\% & 30.0\% \\
Header & 77.4\% & 83.9\% & 64.5\% \\
Section & 55.2\% & 76.5\% & 40.8\% \\
Imports & 44.5\% & 66.5\% & 22.2\% \\
Exports & 100\% &49.2\% & 49.2\%\\
Data directories & 46.7\% & 90.0\% & 43.3\% \\
\end{tabular}
\caption{Features kept by two features selection techniques and their intersection}
\label{tab:fs}
\vspace*{-15pt}
\end{table*}


EMBER features can be regrouped into several categories. The proportion of features kept for each category is detailed in Table~\ref{tab:fs}.  By looking at the intersection between both feature sets, we can conclude that the features selected by \acronym{} are approximately supersets of the features selected by the PV feature selection. This result shows that \acronym{} retrieve similar features and can exploit more features by linear combinations. This is typically the case for byte and strings: the PV feature selection considers these features to be fragile, but by combining them, \acronym{} is able to create non-fragile features. Remarkably, some features are dropped, notably the Exports features. This is probably because these features are not useful for discriminating goodware from malware. Since \acronym{} learns jointly the linear layer that selects features and the monotonic model that performs the detection, it can drop irrelevant features. 

\if0
\subsection{What is the robustness against attackers with better budget?}

The previous attacks have been run with a low budget: each attack had a 60-second timeout. In this section, we examine the robustness against much stronger attacks. Due to this higher cost, we only run three different attacks, the one with the highest success rate during the lost-budget attacks: header perturbation, API addition and PE header extension. We also limited this experiment to two models: the monotonic gradient-boosted tree on the manual features datasets (because it is robust by design) and \acronym. The results are presented in Table~\ref{tab:highbudget}.

\begin{table}
\begin{center}
\begin{tabular}{ c | c | c }
\toprule
\textbf{Attack} & \textbf{\shortstack{\acronym \\on EMBER}} & \textbf{\shortstack{Monotonic gradient-boosted\\trees on manual features}} \\
\midrule
Header perturbation & \textbf{100\%} & \textbf{100\%} \\
API addition & \textbf{100\%} & \textbf{100\%} \\
PE header extension & 71.8\% & \textbf{100\%} \\
\midrule
\end{tabular}
\end{center}
\caption{Robustness of three models against high-budget attackers}
\label{tab:highbudget}
\end{table}

We can first remark that different models are not vulnerable to the same attacks. The random forest is resistant to the PE header extension attack but is easily fooled by the API addition attack. \acronym~is globally robust, even if not very robust against the PE header extension attack. Finally, the monotonic classifier is entirely robust.
\fi


\subsection{Ablation study of \acronym}
\label{sec:ablation}

\begin{table*}[t]
\begin{center}
\begin{tabular}{ c | c | c | c }

\textbf{Linear layer} & \textbf{Monotonicity} & \textbf{ROC AUC} & \textbf{Robustness}
\\
\midrule
$\times$ & $\times$ & 91.6\% & 82.0\% \\
\checkmark & $\times$ & \textbf{94.3}\% & 91.0\%\\
$\times$ & \checkmark & 87.4\% & 71.5\% \\
\checkmark & \checkmark & 93.0\% & \textbf{96.0\%} \\
\end{tabular}
\end{center}
\caption{ROC AUC and robustness of \acronym{} with some components removed}
\label{tab:ablation}
\vspace*{-15pt}
\end{table*}

The results of the ablation study 
are summarized in Table~\ref{tab:ablation}.
The baseline neural network has an AUC of 91.6\% and a robustness of 82.0\%, as seen in the previous section. With the linear layer, the robustness is increased to 91.0\%. This is consistent with our previous conclusion: the linear layer acts as a feature selection and can ditch fragile features.
The neural network with the monotonicity constraint has a far lower ROC AUC: 87.4\%, and its robustness is also low: 71.5\%. This is consistent with the result of other monotonic models when used on EMBER.
The model with both linear layer and monotonic constraints, as proposed in Figure~\ref{fig:arch}, has a much higher robustness (96.0\%), and its ROC AUC is close to the baseline model, demonstrating how the joined effect of the linear layer and the monotonic constraints enable the detector to be accurate and robust.


\section{Conclusion and future work}
\label{sec:conclusion}

This article focuses on the robustness of machine-learning models against black-box adversarial attacks in the context of malware analysis. In this domain, adversarial attacks are not required to rely on imperceptible perturbations but need to preserve the semantics of malware. Such attacks typically rely on semantics-preserving transformations, like API call addition or n-grams modification. We propose to use manually selected features with a monotonic model to obtain a \textit{robust by design} classifier. We also characterize robust classifiers and deduce a framework named \acronym{} that can use adversarial examples to train a linear layer that selects non-fragile features. This model is \textit{empirically robust by design}, meaning that with enough adversarial examples, it converges to a robust classifier. 
In future work, we will adapt \acronym{} to other security-related data, such as robust classifiers of network packets or robust malware dynamic analysis.


\section*{Acknowledgements}

This work has benefited from a government grant managed by the National Research Agency under France 2030 with reference “ANR-22-PECY-0007,” and has been partially supported by Inria under the SecGen collaboration between Inria, CentraleSupélec and CISPA Helmholtz Center for Information Security.

\bibliographystyle{splncs04}
\bibliography{refs}

\if0
\begin{table}
    \caption{Manually selected features from static analysis}
    \centering
    \begin{tabular}{c|c}
        Feature & Type\\
        \hline
        Has the DOS stub been modified? & Boolean \\
        Does the PE has resources? & Boolean\\
        Does the PE have no signature? & Boolean\\
        Number of section & Count\\
        Size of \texttt{.text} section & Integer\\
        Total entropy of \texttt{.text} section & Integer\\
        Size of \texttt{.data} section & Integer\\
        Total entropy of \texttt{.data} section & Integer\\
        Size of \texttt{.rsrc} section & Integer\\
        Total entropy of \texttt{.rsrc} section & Integer\\
        Size of \texttt{.rdata} section & Integer\\
        Total entropy of \texttt{.rdata} section & Integer\\
        \#imported functions from \texttt{kernel32.dll} & Count \\
        \#imported functions from \texttt{shell32.dll} & Count\\
        \#imported functions from \texttt{user32.dll} & Count\\
        \#imported functions from \texttt{comctl32.dll} & Count\\
        \#imported functions from \texttt{gdi32.dll} & Count\\
        \#imported functions from \texttt{urlmon.dll} & Count\\
        \#imported functions from \texttt{ntdll.dll} & Count\\
        \#imported functions from \texttt{winmm.dll} & Count\\
        \#imported functions from \texttt{advapi32.dll} & Count\\
        \#imported functions from \texttt{wininet.dll} & Count\\
        \#imported functions from \texttt{ole32.dll} & Count\\
        \#imported functions from \texttt{mscoree.dll} & Count\\
        \#imported functions from \texttt{gdiplus.dll} & Count\\
        \#imported functions from \texttt{oleaut32.dll} & Count\\
        \#imported functions from \texttt{msvbvm50.dll} & Count\\
        Number of imported functions from other dll & Count\\
        Total number of imported functions & Count\\
        \#imported function name containing "Process" & Count\\
        \#imported function name containing "Thread" & Count\\
        \#imported function name containing "Get" & Count\\
        \#imported function name containing "Set" & Count\\
        \#imported function name containing "File" & Count\\
        \#imported function name containing "Write" & Count\\
        \#imported function name containing "System" & Count\\
        \#imported function name containing "Console" & Count\\
        \#imported function name containing "Delete" & Count\\
    \end{tabular}
    \label{table:manual-features}
\end{table}
\fi



\end{document}